\definecolor{darkblue}{rgb}{0.15,0.35,0.55}
\definecolor{reddish}{rgb}{.8, 0.2, 0.2}
\definecolor{plotblue}{RGB}{0,120,200}
\definecolor{plotgreen}{RGB}{0,155,130}
\definecolor{plotorange}{RGB}{240,120,50}
\definecolor{plotmagenta}{RGB}{240,50,120}
\definecolor{plotgray}{RGB}{128,128,128}
\definecolor{plotcyan}{RGB}{50,190,240}
\definecolor{plotred}{RGB}{205,50,15}
\long\def\ca#1\cb{} 
\newcommand{\becs}{\begin{cases}}
\newcommand{\bem}{\begin{matrix}}
\newcommand{\dya}[1]{|#1\rangle\langle#1|}
\newcommand{\dyad}[2]{|#1\rangle\langle#2|}
\newcommand{\encs}{\end{cases}}
\newcommand{\enm}{\end{matrix}}
\newcommand{\mted}[3]{\langle#1|#2|#3\rangle }
\newcommand{\ot}{\otimes }
\newcommand{\Tr}{{\rm Tr}}
\newcommand{\AC}{{\mathcal A}}
\newcommand{\DC}{{\mathcal D}}
\newcommand{\EC}{{\mathcal E}}
\newcommand{\FC}{{\mathcal F}}
\newcommand{\GC}{{\mathcal G}}
\newcommand{\HC}{{\mathcal H}}
\newcommand{\IC}{{\mathcal I}}
\newcommand{\MC}{{\mathcal M}}
\newcommand{\NC}{{\mathcal N}}
\newcommand{\rB}{\textbf{r}}
\newcommand{\Ibb}{\mathbb{I}}
\newcommand{\al}{\alpha }
\newcommand{\bt}{\beta }
\newcommand{\dl}{\delta }
\newcommand{\ep}{\epsilon}
\newcommand{\lm}{\lambda }
\newcommand{\Lm}{\Lambda }
\newcommand{\sg}{\sigma }
\newcommand{\ketbra}[1]{\ket{#1}\bra{#1}}
\theoremstyle{remark}
\newtheorem{lemma}{Lemma}
\title{Entanglement sharing across a damping-dephasing channel~\footnote{This
paper was presented in part at 2024 IEEE International
Symposium on Information Theory (ISIT)~\cite{10619242}}}
\author[1,2]{Vikesh Siddhu}
\author[3]{Dina Abdelhadi}
\author[2,4]{Tomas Jochym-O’Connor}
\author[2]{John Smolin}
\affil[1]{IBM Research, IBM Research India, India}
\affil[2]{IBM Quantum, IBM T.J. Watson Research Center, Yorktown Heights, NY, USA}
\affil[3]{School of Computer and Communication Sciences, EPFL, Switzerland}
\affil[4]{IBM Quantum, Almaden Research Center, San Jose, CA, USA}
\date{28 Jul 2025}
\begin{document}

\maketitle

\begin{abstract}
    Entanglement distillation is a fundamental information processing task
    whose implementation is key to quantum communication and modular quantum
    computing.  Noise experienced by such communication and computing platforms
    occurs not only in the form of Pauli noise such as dephasing (sometimes
    called $T_2$) but also non-Pauli noise such as amplitude damping (sometimes
    called $T_1$). We initiate a study of practical and asymptotic distillation
    over the so-called joint damping-dephasing noise
    channel. In the practical setting, we propose a distillation scheme that
    completely isolates away the damping noise. In the asymptotic setting we
    derive lower bounds on the entanglement sharing capacities including the
    coherent and reverse coherent information.  Like the protocol achieving the
    reverse coherent information, our scheme uses only backward classical
    communication. However, for realistic damping noise ($T_1 \neq 2T_2$) our
    strategy can exceed the reverse coherent strategy.  In the forward communication setting we numerically
    exceed the single-letter coherent information strategy by observing the
    channel displays non-additivity at the two-letter level. The work shows
    non-additivity can also be found in realistic noise models with magnitudes
    of non-additivity similar to those found in more idealized noise channels.
\end{abstract}

\section{Introduction}

Entanglement is a fundamental quantum resource which enables quantum
computation and communication~\cite{HorodeckiHorodeckiEA09a}. A key goal of
quantum information theory~\cite{BennettShor98} is to find, understand, and
achieve maximum rates at which this resource can be shared noiselessly across
asymptotically many independent uses of a noisy quantum channel. This maximum
rate, called a channel's capacity, can be defined depending on the availability
of classical communication between the sender and receiver of the quantum
channel~\cite{BennettShor04, Wilde17a}.
Availability of free noiseless forward~(from sender to receiver),
backward~(only from receiver to sender), and two-way classical communication to
assist transmission across the quantum channel define the channel's forward,
backward, and two-way entanglement sharing~(also called entanglement
distillation) capacities, $\EC_{\rightarrow}$, $\EC_{\leftarrow}$, and
$\EC_{\leftrightarrow}$, respectively~\cite{Leung08}.
Capacity in the absence of forward communication, $\EC$, is known to equal
$\EC_{\rightarrow}$ and it follows that $\EC \leq \EC_{\leftarrow} \leq
\EC_{\leftrightarrow}$~\cite{BarnumKnillEA00, GarciaPatronPirandolaEA09}. However, a tight understanding of each of these
capacities has thus far eluded the research community. For instance, obtaining
easy to compute entropic expressions for each capacity has been hard and
protocols for achieving these capacities are seldom known for general channels.

A channel $\NC$'s coherent information $I_c(\NC)$ is an entropic
expression~(for def. see eq.~\eqref{eq:ciDef}) that bounds its capacity $\EC(\NC)$ from
below, $I_c(\NC) \leq \EC(\NC)$~\cite{SchumacherNielsen96}.  Here, equality
holds for the class of (anti)~degradable~\cite{DevetakShor05} channels but in
general this inequality can be strict~\cite{DiVincenzoShorEA98,
CubittElkoussEA15} and $\EC(\NC)$ is given by $I_c(\NC^{\ot k }) / k$ in the
limit $k$ tends to infinity, where $\NC^{\ot k}$ represent $k$ joint uses of
$\NC$~\cite{Lloyd97, Shor02a, Devetak05}.
Capacity $\EC$ also equals the maximum achievable rate for quantum error
correction, thus a channel's coherent information also provides a useful bound
on the ability of error correcting codes to remove noise introduced by a
channel.
For example, the well-known hashing rate~\cite{BennettDiVincenzoEA96} used to
benchmark quantum error correcting codes under depolarizing noise
$\Lm$~\cite{GhoshFowlerEA12, BonillaAtaidesTuckettEA21} is nothing but the
channel's coherent information $I_c(\Lm)$.
For most other types of channels~(with the notable exception of degradable
channels), the channel coherent $I_c$ is not easy to determine. 
This difficulty not only precludes one from finding an important lower bound on
the a channel's capacity but also hinders finding an important
benchmark for quantum error correction across the channel.

Error correcting codes are not only useful for protecting quantum information
but they are also valuable for distributing entanglement. Rates for
distribution can be improved beyond $\EC$ by simply adding backward classical
communication~\cite{LeungLimShor09}, however even this simple addition
complicates the discussion of entanglement sharing capacities.  Analogously to
the capacity~$\EC$, we can bound the backward capacity~$\EC_{\leftarrow}$ by an
entropic quantity called the reverse coherent information~$I_r$ (for def. see
eq.~\eqref{eq:rciDef})~\cite{HorodeckiHorodeckiEA00a, DevetakWinter05,
DevetakJungeEA06, GarciaPatronPirandolaEA09}. Since $I_r$ has several
remarkable properties such as concavity and additivity, it can be used more
easily even though it doesn't satisfy data-processing.  Finally, while not the
primary focus in this work, the two-way capacity $\EC_{\leftrightarrow}$, is
even less understood than its one-way counterpart.

One way to make progress in understanding one and two-way capacities is to
study specific channels. Indeed qubit Pauli channels have been extensively
studied and a variety of interesting insights, such as super-additivity of
$I_c$ in the one-way setting~\cite{SmithSmolin07, FernWhaley08,
BauschLeditzky21}, and a variety of innovative
protocols~\cite{VollbrechtVerstraete05, HostensDehaeneEA06} in the two-way
setting have been found this way. Unfortunately, progress on two-way protocols
from studying Pauli channels has been slow and realistic noise models are
typically non-Pauli and non-unital.  There is a need to use such realistic
noise models for understanding error correction across these models and also
for developing new ideas for backward and two-way quantum capacities.  Without
getting a better understanding of the one and two-way capacities of quantum
channels we not only fail to build realistic expectations for modular quantum
computing and long-range quantum communication but also remain partial in our
understanding of quantum entanglement itself.

In this work we propose a way forward to study rates for sharing entanglement
across a physically relevant channel~\cite{SarvepalliKlappeneckerEA08,
AliferisBritoEA09, GhoshFowlerEA12}~(considered in the context of
fault-tolerant quantum computation). This channel is obtained by concatenating
two well-studied channels. 
The first channel is a qubit dephasing and the second is a qubit amplitude
damping~\cite{GiovannettiFazio05}.
For both channels the reverse coherent information is known to achieve the
largest known rate for sharing entanglement using backward~(or even two-way)
classical communication~\cite{GarciaPatronPirandolaEA09}.
Nonetheless, we find that the joint channel benefits from our proposed backward
only protocol that exceeds the channel's reverse coherent information.  One key
aspect of our strategy is to completely isolate the amplitude damping component
of damping-dephasing noise. In this way, the work provides a single-shot
protocol to remove damping noise.
The work paves the way for benchmarking error correction across this
damping-dephasing channel by providing the analog of the hashing rate. It goes
a step further to improve upon this rate by observing the channel's coherent
information displays non-additivity. Even though the noise is more realistic,
the non-additivity found is simple~(occurring at the two-letter level) and with
magnitude comparable to those found in more idealized noise models. 

\section{Preliminaries}
\subsection{Kraus representation}
Let $\HC_a$, $\HC_b$, and $\HC_c$ be finite dimensional Hilbert spaces with
standard orthonormal basis $\{\ket{k}_a\}, \{\ket{j}_b\},$ and $\{\ket{i}_c\}$
respectively.
An isometry $E : \HC_a \mapsto \HC_b \ot \HC_c$~(here $\ot$ represents tensor
product) satisfies $E^{\dag}E = \Ibb_a$, the identity on $\HC_a$, where $\dag$
represents conjugate transpose. This isometry can be expanded as $E = \sum_i K_i \ot
\ket{i}_c = \sum_j L_j \ot \ket{j}_b$~(notice $\mted{j}{K_i}{k} =
\mted{i}{L_j}{k}$) and generates a pair of channels
\begin{align}
    \begin{aligned}
    \NC(M) &= \Tr_{c}(EME^{\dag}) = \sum_i K_i M K_i^{\dag},
        \\
    \NC^c(M) &= \Tr_{b}(EME^{\dag}) = \sum_j L_j M L_j^{\dag}
    \label{eq:chanDef}
    \end{aligned}
\end{align}
from $a$ to $b$ and $a$ to $c$, respectively, with Kraus operators $\{K_i\}$
and $\{L_j\}$, respectively, here $M$ is any linear operator on $\HC_a$. One
says $\NC^c$ is the complement of $\NC$ and vice-versa.  If $\NC^c$ can be
obtained from the output of $\NC$ by applying some channel $\MC$, i.e., $\MC
\circ \NC = \NC^c$, then $\NC$ is said to be degradable and $\NC^c$
anti-degradable.  An {\em entanglement-breaking}~(EB)
channel~\cite{HorodeckiShorEA03a} $\NC$ takes the form $\NC(M) = \sum_i \Tr(M
V_i) \rho_i$ where $\{V_i\}$ are a collection of positive-semidefinite~(PSD)
operators that sum to the identity, and $\rho_i$ are density operators~(PSD and
unit trace $\Tr(\rho_i) = 1$).

\subsection{Entropic quantities: coherent information and capacities}
The von-Neumann entropy of a density operator $\rho$,
\begin{equation}
    S(\rho) := - \Tr(\rho \log \rho) = - \sum_i \lm_i \log \lm_i,
    \label{eq:entroVon}
\end{equation}
where $\{\lm_i\}$ are eigenvalues of $\rho$ and we use $\log$ base 2 by
default.
The coherent information of a channel $\NC$ at an input density operator
$\rho$, 
\begin{equation}
    I_c(\NC, \rho) = S \big( \NC (\rho) \big) - S \big( \NC^c(\rho)\big),
    \label{eq:ciDef}
\end{equation}
maximized over density operators $\rho$ gives the channel coherent information
$I_c(\NC)$.  This maximization is generally non-convex, except with $\NC$ is
degradable and $I_c(\NC, \rho)$ becomes concave in
$\rho$~\cite{YardHaydenEA08}.  A channel $\NC$'s entanglement sharing capacity
is given by the limit~\cite{Lloyd97, Shor02a, Devetak05},
\begin{equation}
    \EC(\NC) = \lim_{k \mapsto \infty} \frac{1}{k} I_c(\NC^{\ot k}, \rho), 
    \quad k \in \mathbf{N},
    \label{eq:eCap}
\end{equation}
which equals $I_c(\NC)$ when $\NC$ is either degradable or anti-degradable.
For anti-degradable channels $\EC(\NC) = 0$. 
The reverse coherent information of a channel $\NC$ at an input density
operator $\rho$, 
\begin{equation}
    I_r(\NC, \rho) = S (\rho) - S \big( \NC^c(\rho)\big),
    \label{eq:rciDef}
\end{equation}
represents an achievable rate for sharing entanglement across $\NC$ using
only backward classicaly communication from $b$ to $a$ in addition to any pre-agreed
classical communication between $b$ and $a$ prior to using the channel.
Since $I_r(\NC, \rho)$ is concave in $\rho$ for all $\NC$, it can be maximized
over density operators $\rho$ with relative ease to obtain the reverse coherent
information $I_r(\NC)$.  This reverse coherent information is additive
$\IC_r(\NC^{\ot k}) = k \IC_r(\NC) \ \forall \ k \in
\mathbb{N}$~\cite{GarciaPatronPirandolaEA09}.

\subsection{$\log$-singularity Method(s)}
\label{App.lSing}
Here we review a $\log$-singularity method~\cite{Siddhu21,SinghDatta22}) and
state a minor extension of the method to analyse the reverse coherent
information.

Let $0 \leq \ep \leq 1$ be a real
parameter and $\rho(\ep)$ be a one-parameter family of density operators with
von-Neumann entropy $S(\ep): = S\big( \rho(\ep) \big)$.  If one or several
eigenvalues of $\rho(\ep)$ increase linearly from zero to leading order in
$\ep$ then a small change in $\ep$ from zero changes $S(\ep)$ by $x |\ep \log
\ep|$, $x>0$, and $S(\ep)$ is said to have an $\ep$ $\log$-singularity with
rate $x$.

Let $\NC$ be a channel with complement $\NC^c$. These channels $\NC$ and $\NC^c$
map an input density operator $\rho_a(\ep)$ to $\rho_b(\ep) := \NC \big(
\rho_a(\ep) \big)$ and $\rho_c(\ep) = \NC^c \big( \rho_a(\ep) \big)$,
respectively. Let $S_a(\ep), S_b(\ep),$ and $S_c(\ep)$ denote the von-Neumann
entropies of $\rho_a(\ep), \rho_b(\ep)$, and $\rho_c(\ep)$, respectively.
At $\rho_a(\ep)$, let $I_c(\ep):= S_b(\ep) - S_c(\ep)$ denote the coherent information 
and $I_r(\ep):= S_a(\ep) - S_c(\ep)$ denote the reverse coherent information.

If there is an input $\rho_a(\ep)$ such that at $\ep = 0$, $I_c(\ep) = 0$ and
$S_b(\ep)$ has an $\ep$ $\log$-singularity with a higher rate than $S_c(\ep)$,
i.e., $S_b(\ep)$ has a stronger $\log$-singularity than $S_c(\ep)$, then
$I_c(\NC) > 0$. This statement showing positivity of a channel's coherent
information, discussed previously in~\cite{Siddhu21}~(see
also~\cite{SinghDatta22}) can be easily extended to a channel's
reverse-coherent information as follows. If an $\ep$ $\log$-singularity in
$S_a(\ep)$ is stronger than the one in $S_c(\ep)$ then the reverse coherent
information $I_r(\ep)>0$.

\section{Damping-Dephasing Channel}

A qubit density operator can always be written in the Bloch form,
\begin{equation}
    \rho = \frac{1}{2} (\Ibb_2 + x X + y Y + zZ), 
    \label{eq:blochIn}
\end{equation}
where the Bloch vector $\rB = (x,y,z)$ has Euclidean norm at most one, $\Ibb_2$
is the $2 \times 2$ identity matrix, $X =
\dyad{0}{1} + \dyad{1}{0}$, $Y= i (\dyad{0}{1} - \dyad{1}{0})$, and $Z =
\dya{0} - \dya{1}$ are Pauli matrices. 
The von-Neumann entropy of $\rho$ in~\eqref{eq:blochIn} is $h \big( (1 +
|\rB|)/2 \big)$ where $h(p):= -\big( p \log p + (1-p) \log (1-p) \big)$ is the
binary entropy function.

\subsection{Dephasing channel}
\label{sec:Dephchannel}

An isometry $F: \HC_a \mapsto \HC_b \ot \HC_{c1}$ of the form,
\begin{equation}
    F \ket{0} = \ket{0} \ot \ket{\phi_0} \quad \text{and} \quad
    F \ket{1} = \ket{1} \ot \ket{\phi_1},
    \label{eq:dephDef}
\end{equation}
where $\ket{\phi_i} = \sqrt{1-p} \ket{+} + (-1)^i \sqrt{p} \ket{-},$ $i \in
\{0,1\}$, $0 \leq p \leq 1/2$, and $\ket{\pm} = (\ket{0} \pm \ket{1})\sqrt{2}$
generates channels~(via~eq.\eqref{eq:chanDef})
\begin{align}
    \begin{aligned}
    \DC_p(M) &= (1-p) M + p Z M Z, \\
    \DC_p^c(M) &= \phi_0 \Tr(M U_0) + \phi_1 \Tr(M U_1 ),
    \label{eq:dephasing}
    \end{aligned}
\end{align}
where $\phi_i = \dya{\phi_i}$ and $U_i = \dya{i}$.
Here $\DC_p$ is the qubit dephasing channel with dephasing probability $p$.
Let $\rho_a$ be a qubit density operator with Bloch vector $\rB_a = (x, y, z)$
then $\rho_b := \DC_p(\rho_a)$, and $\rho_{c1} := \DC_p^c(\rho_a)$ have Bloch
coordinates
\begin{align*}
    \rB_b &= ( (1-2p)x, (1-2p)y, z), \quad \text{and} \\
    \rB_{c1} &= (1-2p, 0, 2 \sqrt{p(1-p)}z),
\end{align*}
respectively.
Channel $\DC_p$ is degradable for $0 \leq p \leq 1/2$ and EB at $p=1/2$. The
one-way, backward, and two-way capacities of this channel equal its coherent
information evaluated at $\Ibb_2/2$~\cite{Rains97},
\begin{equation}
    \EC(\DC_p) = \EC_{\leftarrow}(\DC_p) = \EC_{\leftrightarrow}(\DC_p)  = I_c(\DC_p, \Ibb_2/2) = 1- h(p).
\end{equation}
\subsection{Amplitude damping channel}
\label{sec:ADchannel}
An isometry $G: \HC_b \mapsto \HC_d \ot \HC_{c2}$ of the form,
\begin{align}
    \begin{aligned}
    G \ket{0} &= \ket{0} \ot \ket{1}, \\
    G \ket{1} &= \sqrt{g} \ket{0} \ot \ket{0} + \sqrt{1-g} \ket{1} \ot \ket{1},
    \label{eq:dampDef}
\end{aligned}
    \end{align}
where $0 \leq g \leq 1$ generates a pair of channels 
\begin{align}
    \begin{aligned}
    \AC_g(\rho) &= A_0 \rho A_0^{\dag} + A_1 \rho A_1^{\dag}, \\
    \AC_g^c(A) &= B_0 \rho B_0^{\dag} + B_1 \rho B_1^{\dag},
    \label{eq:damping}
\end{aligned}
    \end{align}
where $\AC_g$ is a qubit amplitude damping channel which damps $\dya{1}$ to
$\dya{0}$ with probability $g$ and $\AC^c$ is a qubit amplitude damping channel
with damping probability $1-g$ if one interchanges $\ket{0}_{c2}$ and
$\ket{1}_{c2}$.
An input density operator with Bloch vector $\rB_b = (x, y, z)$ is mapped to
$\rho_d := \AC_g(\rho_b)$, and $\rho_{c2} := \AC^c_g(\rho_b)$ with Bloch
vectors
\begin{align}
    \begin{aligned}
        \rB_d &= ( \sqrt{1 - g}x, \sqrt{1-g}y, (1-g)z + g), \quad \text{and} \\
    \rB_{c2} &= (\sqrt{g}x, -\sqrt{g}y, g -gz -1),
    \end{aligned}
\end{align}
respectively.

At $g = 1$, $\AC_g(M) = \Tr(M) \dya{0}$ and thus EB;
in general, $\AC_g$ is degradable when $0 \leq g < 1/2$ and anti-degradable
otherwise~\cite{WolfPerezGarcia07}. 
Thus, the coherent information $I_c(\AC_g) = \EC(\AC_g)$. In general
$I_c(\AC_g)$ is less than the reverse coherent information $I_r(\AC_g)$, for $g
\neq 0$,
\begin{equation}
    \label{eq:StrictneqA}
    \EC(\AC_g) < \EC_{\leftarrow}(\AC_g).
\end{equation}
Proof for this numerically observed fact~\cite{GarciaPatronPirandolaEA09} 
is shown in Lemma~\ref{lm:maxI_r}.
\subsection{Joint damping-dephasing channel}
\label{sec:Jointchannel}
The combined action of $\DC_p$ and $\AC_g$, given by either $\DC_p \circ \AC_g$
or $\AC_g \circ \DC_p$ as the action of the channels commute, gives a channel
$\FC$.  This channel has two parameters, the dephasing~($p$) and damping~$(g)$
probabilities.
Channel $\FC$ takes the form,
\begin{equation}
    \FC(\rho) = \sum_i O_i \rho O_i^{\dag},
    \label{eq:dampDeph}
\end{equation}
where $O_0 = \sqrt{1-p}(\ketbra{0}+\sqrt{1-g}\ketbra{1})$, $O_1= \sqrt{g}
\ket{0}\bra{1}$ and $O_2 = \sqrt{p}(\ketbra{0}-\sqrt{1-g}\ketbra{1})$.
These Kraus operators either commute or anti-commute with $Z$, $O_j Z_a =
(-1)^j O_j$, $j \in \{0,1,2\}$. The qubit output of $\FC$, $\rho_d$, has Bloch
vector,
\begin{equation}
    \rB_d = \big( (1-2p)\sqrt{1-g}x,  (1-2p)\sqrt{1-g}y, (1-g)z + g \big).
    \label{eq:outBloch}
\end{equation}
Sometimes it is convenient to write the channel's input and output in the following form:
\begin{align}
    \begin{aligned}
        \rho_a &= 
    \begin{pmatrix}
        1 - \rho_{11} & \rho_{01} \\
        \rho_{01}^* & \rho_{11}
    \end{pmatrix}, 
    \\
    \rho_d = \FC(\rho_a) &= 
    \begin{pmatrix}
        1-\rho_{11}e^{-t/T_1} & \rho_{01}e^{-t/T_2} \\
        \rho_{01}^*e^{-t/T_2} & \rho_{11}e^{-t/T_1} 
    \end{pmatrix},
    \label{eq:simplRho}
\end{aligned}
    \end{align}
where $\rho_{01}$ is a complex parameter, and $ \rho_{11}, \ t, \ T_1, \ T_2$
are real non-negative parameters. These parameters are related to $g$ and $p$
as
\begin{equation}
    g = 1 - e^{-t/T_1} \quad 
    \text{and} \quad
    p = \frac{1}{2}(1 - e^{-t(1/T_2 - 1/2T_1)}),
\end{equation}
where the constraint $0 \leq p \leq 1/2$ implies that $T_2 \leq 2T_1$ with
equality when $p = 0$, i.e., dephasing noise is absent and $\FC$ is simply a
qubit amplitude damping channel. Here $T_1$ can be viewed as a decay
constant~(sometimes called single-qubit relaxation time), with which the 
$\ket{1}$ state decays to $\ket{0}$ and $T_2$
as the decay constant~(sometimes called the dephasing time), with which the
off-diagonal terms in $\rho_a$ dephase.  This relationship of $\FC$ with the
parametrization in~\eqref{eq:simplRho} and parameters $T_1$ and $T_2$ make $\FC$
amenable for describing noise in physical setups.

It is useful to write the complement of $\FC$ in two different ways.  The first
makes use of the relationship between the Kraus operators~\eqref{eq:dampDeph}
of $\FC$ and that of its complement (see above \eqref{eq:chanDef}) to give,
\begin{equation}
    \FC^c(A) = P_0 A P_0^{\dag} + P_1 A P_1^{\dag}
    \label{eq:chanComp1}
\end{equation}
where
\begin{align}
    \begin{aligned}
     P_0 &= \sqrt{g} \dyad{0}{1} + \sqrt{1-p} \dyad{1}{0} + \sqrt{p} \dyad{2}{0},
    \\
    P_1 &= \sqrt{(1-g)(1-p)} \dyad{1}{1} - \sqrt{p(1-g)} \dyad{2}{1},
\end{aligned}
\end{align}
and $\FC^c$ maps $a$ to an environment $e$.
Alternatively, notice the isometry, $H: \HC_a \mapsto \HC_d \ot \HC_{c1c2}$,
\begin{equation}
    H = (I_{c1} \ot G) F,
\end{equation}
defines the channel $\FC(A) = \Tr_{c1c2} (HAH^{\dag})$~\eqref{eq:dampDeph} and
its complement $\GC(A) = \Tr_{d} (HAH^{\dag})$. Using this second definition, we
obtain
\begin{equation}
    \GC(A) = Q_0 A Q_0^{\dag} + Q_1 A Q_1^{\dag},
    \label{eq:chanComp2}
\end{equation}
here Kraus operators $Q_i: \HC_{a} \mapsto \HC_{c1c2}$ take the form
\begin{align}
    \begin{aligned}
    Q_0 &= (\ket{\phi_0}_{c1} \ot \ket{0}_{c2}) \bra{0} + \sqrt{g}
    (\ket{\phi_1}_{c1} \ot \ket{1}_{c2}) \bra{1}
    \\
    Q_1 &= \sqrt{1-g} (\ket{\phi_1}_{c1} \ot \ket{0}_{c2}) \bra{1}
    \end{aligned}
\end{align}
where $\ket{\phi_i}$ are defined below~\eqref{eq:dephDef}.
Using the form~\eqref{eq:simplRho} for the input, the output 
$\rho_{c1c2} = \GC(\rho_a)$ can be written as a block matrix 
\begin{equation}
    \begin{pmatrix}
        (1-g) \rho_{11} \proj{\phi_1} + (1- \rho_{11})\proj{\phi_0}  &
        \sqrt{g} \rho_{01}\dyad{\phi_0}{\phi_1}  \\
        \sqrt{g} \rho_{10} \dyad{\phi_1}{\phi_0}  & g \rho_{11}\proj{\phi_{1}} 
    \end{pmatrix},
    \label{eq:blochMatC2}
\end{equation}
where each block is $2 \times 2$.

\section{Entanglement Distillation over the Joint Damping-Dephasing Channel}
\subsection{Forward only Distillation}

A lower bound on the one-way distillable entanglement of $\FC$ is the channel's
coherent information. We first give bounds on parameters for which the coherent
information is positive.

\lemma{The coherent information $I_c(\FC)$ is strictly positive for
\begin{equation}
    g < g_{\max}(p):= 1 - \frac{1}{2(1 - 2 p (1-p))}
\end{equation}
where $0 \leq p < 1/2$.  \label{lm:q1Pos}} 
\begin{proof}
    Consider an input density operator $\rho_a(\ep)$ of the form
    in~\eqref{eq:simplRho} where $\rho_{01} = 0$, $\rho_{11} = \ep$, and
    $0 \leq \ep \leq 1$.  Channel output $\rho_d(\ep)$ has eigenvalues
    $\{\ep (1-g), 1 - \ep(1-g) \}$, and thus $S_d(\ep)$ has an $\ep$ 
    $\log$-singularity of rate $x_d = 1-g$.
    The output $\rho_e(\ep) = \FC^{c}\big( \rho_a(\ep) \big)$
    in~\eqref{eq:chanComp1} has eigenvalues
    \begin{equation}
        \lm_0 = g \ep, \quad 
        \lm_{\pm} = \frac{1 - g \ep}{2}(1 \pm \sqrt{u^2 + v^2})
    \end{equation}
    where $u = 2 \dl \sqrt{p(1-p)}, v = 1-2p$ and $\dl = (1 + \ep
    (g-2))/(1-\ep g)$. While $\lm_{+}$ is non-zero at $\ep = 0$, expanding
    $\lm_{-}$ to linear order gives $\lm_{-} = \ep 4 p (1-p) (1-g) + O(\ep^2)$. 
    As a result $S_e(\ep)$ has an $\ep$ $\log$-singularity
    of rate $x_{e} = g + 4 p (1-p) (1-g)$.

    If $x_d > x_e$ then $I_c(\FC)>0$. The inequality $x_d > x_e$ occurs where
    $g < g_{\max}:= 1 - \frac{1}{2(1 - 2 p (1-p))}$

\end{proof}

Next, we show that the calculation of $I_c(\FC)$ can be simplified by the
following observation:
\lemma{The coherent information $I_c(\FC, \rho)$ at an input operator
with Bloch vector $\rB = (x, y, z)$ only depends on $z$ and $x^2 + y^2$.
\label{lm:xzOpt}}

Proof for the Lemma in App.~\ref{App.ChnCoh} makes use of an alternate form
of the channel's complement~(see Sec.~\ref{sec:Jointchannel}).  The
simplification in Lemma~\ref{lm:xzOpt} allows calculation of $I_c(\FC)$ over
Bloch coordinates $\rB = (x,0,z)$. 
We numerically find this optimum to lie along $(0,0,z)$. In
Fig.~\ref{fig:forwardDistill} we plot the coherent information $I_c(\FC)$ as a
function of the amplitude damping probability $g$ for fixed dephasing
probability $p$. As $g$ is increased, the coherent information $I_c(\FC)$
decreases, becoming zero at $g = g_{\max}(p)$ and remaining zero thereafter.
Recently, an optimization of the coherent information over $z$ was also
reported in~\cite{MeleSalekEA24} in the context of bounding from below the
bosonic loss-dephasing channel's quantum capacity.

Using a log-singularity based argument we also prove that the
coherent information of the complementary channel is positive for a wide range
of parameters.

\lemma{The coherent information $I_c(\FC^c)$ is strictly positive for
$0 < p \leq 1/2$ when $0 < g < 1$.}
\label{lm:q1PosComp}

\begin{proof}
    Consider an input density operator $\rho_a$ of the form
    in~\eqref{eq:simplRho} where $\rho_{01} = 0$, $\rho_{11} = 1-\ep$, and $0
    \leq \ep \leq 1$. The density operator $\FC(\rho_a) = \rho_d$ has
    eigenvalues $\{g + \ep (1-g), (1-g) - \ep(1-g) \}$, and thus $S(\rho_d)$
    has no $\ep$ $\log$-singularity.
    The output $\FC^{c}(\rho)$ in~\eqref{eq:chanComp1} has eigenvalues,
    \begin{equation}
        \lm_0 = g (1-\ep), \quad \text{and} \quad
        \lm_{\pm} = \frac{1 - g (1-\ep)}{2}(1 \pm \sqrt{u^2 + v^2})
    \end{equation}
    where $u = 2 k \sqrt{p(1-p)}, v = 1-2p$ and $k = \big( \ep (2-g) - (1-g)
    \big)/\big( 1 - g(1-\ep) \big)$.  While $\lm_{+}$ is non-zero at $\ep = 0$,
    expanding $\lm_{-}$ to linear order in $\ep$ with $0 < g < 1$ gives
    $\lm_{-} = \ep 4 p (1-p) + O(\ep^2)$.  As a result $S\big( \FC^c(\rho)
    \big)$ has an $\ep$ $\log$-singularity of rate $x_{e} = 4 p (1-p)$.
    This rate is strictly positive for $0 < p \leq 1/2$ and thus
    $I_c(\FC^c)>0$.
\end{proof}

From the definition of $\FC$ and lower bounds on $\EC(\FC)$ it follows that
$\FC$ is (1) anti-degradable when $g \geq 1/2$ or at $p = 1/2$ (2) EB at $g =
1$ or $p = 1/2$ (3) degradable at $g = 0$, and also at $p=0$ and $g \leq 1/2$,
and (4) never degradable for all $0 < p \leq 1/2$ and $g \neq 1$ since
$I_c(\FC^c) > 0 $~(see Lemma~\eqref{lm:q1PosComp}). It follows from (4) that
$\FC$ is degradable iff $p=0$ and $g \leq 1/2$~(when $\FC = \AC_g$), or $g = 0$~(
when $\FC = \DC_p$). An alternate proof for this result appears in recent
work~\cite{MeleSalekEA24}. That same work shows $\FC$ is anti-degradable iff
$(1-2p)^2/\big( 1 + (1-2p)^2 \big) \leq g$ using techniques
from~\cite{ChenJiEA14}. In agreement with that result we supply
an explicit anti-degrading map $\Tilde{\mathcal{F}}$ for $g \geq\frac{(1-2p)^2}{1+(1-2p)^2}$ such that $\mathcal{F} =\Tilde{\mathcal{F}}\circ\mathcal{F}^c $.
Note that for all $0\leq p \leq 1$, 
$\frac{(1-2p)^2}{1+(1-2p)^2}\leq 1/2$. Thus, $g\geq 1/2$ implies $g \geq\frac{(1-2p)^2}{1+(1-2p)^2}.$
Let $$\ket{\kappa_k} = \sqrt{1-p}\ket{1}+(-1)^k\sqrt{p}\ket{2},\ket{\tilde{\kappa}_k} = \sqrt{p}\ket{1}+(-1)^k\sqrt{1-p}\ket{2}, \textnormal{ for } k \in \{0,1\}.$$
Define the map $\mathcal{R}$ via the following Kraus operators: 
$$
R_0 = \ket{1}\bra{0}+\ket{0}\bra{\kappa_0},
R_1 = \ket{1}\bra{\tilde{\kappa}_1}.$$
We construct the antidegrading map as a composition of the map $\mathcal{R}$ with a damping channel followed by a dephasing channel $\Tilde{\mathcal{F}} = \mathcal{D}_\delta\circ\mathcal{A}_\gamma\circ\mathcal{R},$ where the damping parameter $\gamma$, and the dephasing parameter $\delta$ are given by$$
\gamma = \frac{g-(1-2p)^2(1-g)}{1-(1-2p)^2(1-g)},\delta = \frac{1}{2}-\frac{(1-2p)\sqrt{1-(1-2p)^2(1-g)}}{2\sqrt{g}}.$$
One can verify that for  $1\geq g \geq\frac{(1-2p)^2}{1+(1-2p)^2},$ the damping and dephasing parameters define valid channel parameters with $0\leq \gamma, \delta \leq 1.$

\begin{figure}[htbp]
	\centering
    \includegraphics{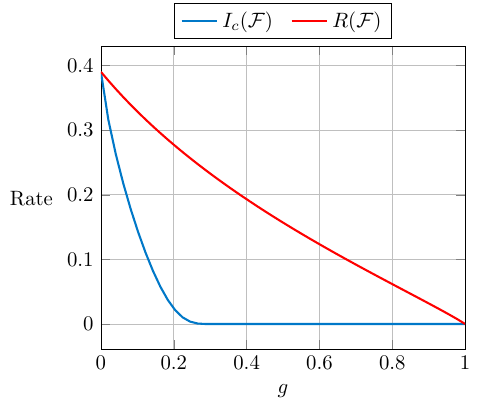}
    \caption{A plot of the coherent information $I_c$ of the damping-dephasing
    channel $\FC$~\eqref{eq:dampDeph} as a function of the amplitude damping
    probability $g$ for fixed dephasing probability $p=0.15$. The plot also
    shows the Rains information of the channel $R(\FC)$~\cite{FangFawzi21},
    which upper bounds the forward-assisted entanglement distillation rate.}
	\label{fig:forwardDistill}
\end{figure}

\subsection{Non-Additivity of Coherent information}

\begin{figure}[htbp]
	\centering
    \includegraphics[scale=.6]{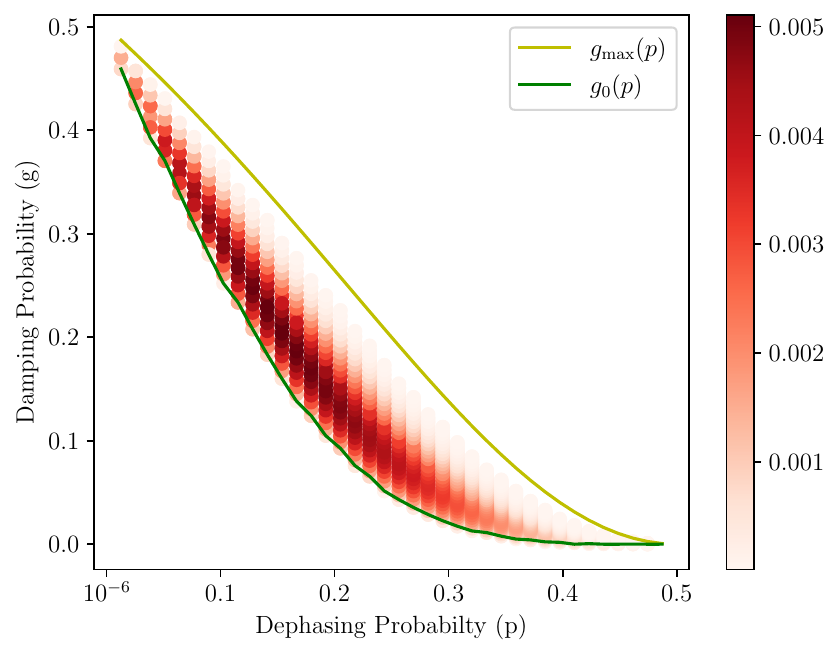}
    \caption{A phase diagram showing the region of dephasing and damping
    probabilities, $p$ and $g$, respectively, where the coherent information of
    $\FC$ is non-additive at the two-letter level. The amount of
    non-additivity found~\eqref{eq:nonAddAmt} is represented by the color.}
    \label{fig:nonAdd}
\end{figure}
We observe that non-additivity of the coherent information, i.e., strict inequality in
\begin{equation}
    \frac{1}{n} I_c(\FC^{\ot n}) \geq I_c(\FC),
    \label{eq:superAdd}
\end{equation}
where $n \geq 2$, occurs for a range of $p$ and $g$ values. When $n$ is the
smallest integer for which~\eqref{eq:superAdd} is a strict inequality, one says
non-additivity occurs at the $n$-letter level. We are able to observe
non-additivty at the lowest possible, two-letter level.
This observation comes from maximizing $I_c(\FC^{\ot 2}, \sg)$ over a simple
ansatz,
\begin{equation}
    \sg = \lm_1 \dya{00} + \lm_2\dya{\phi} + \lm_3 Z_{a1} \dya{\phi} Z_{a1} +
    \lm_4 \dya{11}
    \label{eq:ansatz}
\end{equation}
where $\phi = (\ket{00} + \ket{11})/\sqrt{2}$ is a maximally entangled state,
$Z_{a1}$ acts on the first input of the $\FC \ot \FC$ channel, and $\{\lm_i\}$
are free paramters that form a probability distribution.
The ansatz~\eqref{eq:ansatz} has a pleasing property,
at a special value of $\{ \lm_i\}$, $\lm_1 = (1+z)^2/4, \lm_2 = \lm_3 =
(1-z^2)/4,$ and $\lm_4 = (1-z)^2/4$ with $-1 \leq z \leq 1$, this ansatz
represents a product of two identical qubit density operators, each with Bloch
coordinates $\rB = (0,0,z)$. Since $I_c(\FC)$ is attained at these Bloch
coordinates~(see discussion below Lemma~\ref{lm:xzOpt}), the maximum of $I(\FC^{\ot 2},
\sg)$ over $\{\lm_i\}$, $I_c^*$, is at least $2 I_c(\FC)$.
Maximization over these parameters reveals that for a range of $0 < g <1$ and
$g_{0}(p) < p < g_{\max}(p)$ values, where $g_0(p)$ is found numerically,
non-additivity occurs at the two-letter level~(see Fig.~\ref{fig:nonAdd}). The
amount of non-additivity found is
\begin{equation}
    \dl = \frac{1}{2} I^*_c - I_c(\FC).
    \label{eq:nonAddAmt}
\end{equation}
For any fixed $p$ as $g$ is increased from zero, this amount is first zero until
$g$ reaches $g_0(p)$, then increases from zero, reaches a maximum, and
decreases to zero as $g$ approaches $g_{\max}(p)$~(see Fig.~\ref{fig:nonAdd2}).

\begin{figure}[htbp]
	\centering
    \includegraphics[scale=.5]{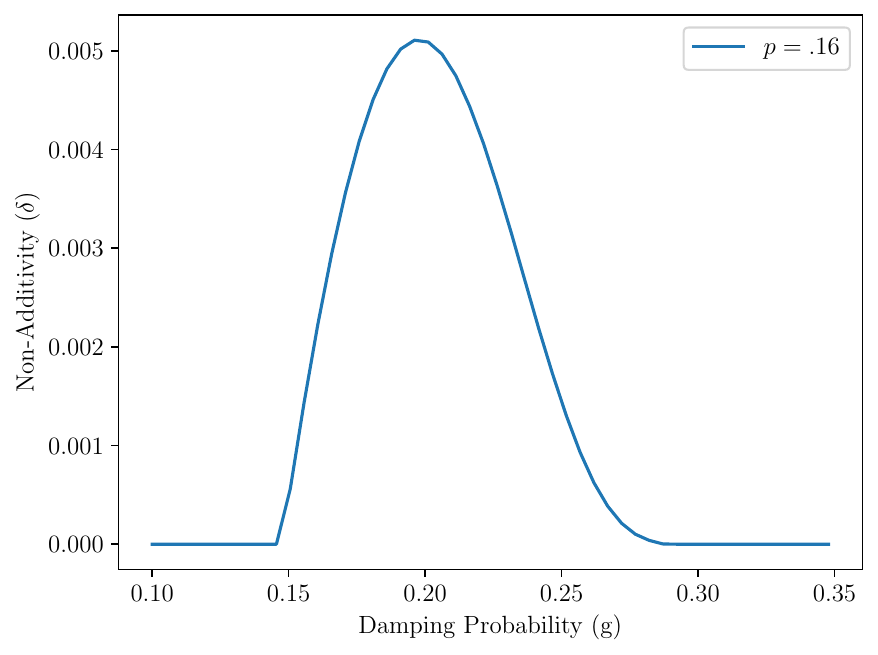}
    \caption{For fixed dephasing probability $p=.16$, the amount of
    non-additivity found, $\dl$~\eqref{eq:nonAddAmt} as a function of damping
    probability $g$.}
    \label{fig:nonAdd2}
\end{figure}

\subsection{Backward only distillation bound}

A lower bound on $\EC_{\leftarrow}(\FC)$ is the channel's reverse coherent
information.  We prove that it is positive for a
wide range of values $p$ and $g$.
\lemma{The reverse coherent information $I_r(\FC) > 0$ for all $0 \leq p <
1/2$ and $0 \leq g < 1$.  \label{lm:revCI}}

\begin{proof}
    Consider an input density operator $\rho_a(\ep) $ of the form
    in~\eqref{eq:simplRho} where $\rho_{01} = 0$, $\rho_{11} = \ep$ and $0< \ep
    \leq 1$.
    Thus $S\big( \rho_a(\ep) \big)$ has an $\ep$ $\log$-singularity of
    rate $x_a = 1$.

    As discussed in the proof of Lemma~\ref{lm:q1Pos}, $S(\FC^c(\rho))$ has an $\ep$
    $\log$-singularity of rate $x_{e} = g + 4 p (1-p) (1-g)$.  Notice $x_a >
    x_e$ whenever $p \neq 1/2$ and $g \neq 1$. As a result $x_a > x_e$.
    Since at $\ep = 0$ the revere coherent information is zero, for small $\ep$,
    $I_r(\FC)>0$.
\end{proof}

The above result is in agreement with~\cite{MeleSalekEA24} which arrives
at the positivity of the two-way capacity of $\EC$ using a different route.
To compute the reverse coherent information $I_r$, we use concavity of $I_r$ and a $Z$
symmetry in both $\FC$ and $\FC^c$.

\lemma{The reverse coherent information $I_r(\FC)$ is achieved at a qubit
density operator with Bloch coordinates $\rB = (0, 0, z)$.
\label{lm:xzOpt2}}

\begin{proof}
    Following an argument similar to one in the appendix of~
    \cite{garcia2008reverse}, we show the optimal input state for $I_r(\FC)$ is
    diagonal.
    Let $H' = \sum_i O_i \ot \ket{i}_e$ be an isometry that generates $\FC$
    in~\eqref{eq:dampDeph} and $\FC^c$ in~\eqref{eq:chanComp1}. This isometry
    satisfies $(W_e \ot Z_b) H'  = H' Z_a$ where $W = -\dya{0} + \dya{1} + \dya{2}$.
    This equality implies
    \begin{align}
        \begin{aligned}
            \FC(Z_a N Z_a) &= Z_b \FC(N) Z_b, \quad \text{and} \\
        \FC^c(Z_a N Z_a) &= W_e \FC^c(N) W_e
        \label{eq:symm}
        \end{aligned}
    \end{align}
    for any $N$.
    Let $\rho_a$ be any input, and $\rho_{be} = H' \rho_a (H')^{\dag}$.  The
    reverse coherent at this input, $ I_r(\FC, \rho_a) = S(\rho_a) - S\big(
    \FC^c(\rho_a) \big) =: S(b|e)_{\rho_{be}}$ is concave in $\rho_{be}$.
    For any $\Tilde{\rho}_a$, let 
    \begin{equation}
        \rho_a = \frac{\Tilde{\rho}_a + Z_a \Tilde{\rho}_a Z_a}{2},
    \end{equation}
    be a diagonal density operator.  From concavity and symmetry
    in~\eqref{eq:symm} it follows the reverse coherent information at this
    operator is larger than $I_r(\FC, \tilde{\rho}_a)$,
    \begin{align}
        \begin{aligned}
    I_r(\FC, \rho_a) = S(b|e)_{\rho_{be}} &\geq
        \frac{1}{2}[S(\Tilde{\rho}_a)-S\big(\FC^c(\Tilde{\rho}_a) \big) \\ 
            &+ S(Z\Tilde{\rho}_aZ)-S\big(\FC^c(Z\Tilde{\rho}_aZ)
        \big)]\\
            &=\frac{1}{2}[I_r(\FC, \Tilde{\rho}_a)+I_r(\FC, \Tilde{\rho}_a) ] \\ 
            &= I_r(\FC, \tilde{\rho}_a).
    \end{aligned}
    \end{align}

\end{proof}

In Fig.~\ref{fig:ciChan} the reverse coherent information $I_r(\FC)$ is
plotted as a function of the amplitude damping probability $g$ for fixed
dephasing probability $p$. As $g$ is increased, the reverse coherent
information $I_r(\FC)$ decreases, and for small $\dl g := 1-g$ it
takes an asymptotic form,
\begin{equation}
    I_r(\FC) \simeq \dl g (1-q) \big( q \dl g \big)^{q/(1-q)},
    \label{eq:asyRev}
\end{equation}
where $q := 4p(1-p)$.
This form comes from noticing numerically that for any fixed $p$,
as $g$ tends to one and $\dl g \mapsto 0$, the optimum reverse coherent
information is obtained at a density operator $\rho = \text{diag}(1 - \ep,
\ep)$ where $\ep$ is small. For such small $\ep$, the reverse coherent
information can be written as
\begin{equation}
    I_r(\ep) \simeq f(\ep) = (\al \ep \ln \ep + \bt \ep) \log_2 e 
    \label{eq:icSmEp}
\end{equation}
where terms of order $O(\ep^2)$ are dropped,
\begin{equation}
    \al = a_0(q) \dl g,  \quad 
    \bt = b_1(q) \dl g + b_2(q) \dl g \ln \dl g,
\end{equation}
where terms of $\ln(1 - \dl g)$ are dropped and 
\begin{equation}
    a_0(p) = q - 1, \quad
    b_1(p) = 1 - q + q \ln q, \quad \text{and} \quad
    b_2(p) = q,
    \label{eq:consts}
\end{equation}
where $q := 4p(1-p)$. Since $0 \leq p \leq 1/2$, $0 \leq q \leq 1$ and 
$\al < 0$, thus $I_r(\ep)>0$ for small enough $\ep$. The function $f(\ep)$
has a maximum value at $\ep = \ep^* := \exp\big( - (1 + \bt/\al)\big)$.  
This maximum value of~\eqref{eq:icSmEp} for small $\dl g $ can then be written as
\begin{equation}
    I_r(\FC) \simeq \dl g (1-q) \big( q \dl g \big)^{q/(1-q)}.
    \label{app:eq:asyRev}
\end{equation}
We find good numerical agreement between the left hand side and the right
hand side of the above equation for small $\dl g$~(recall $q = 4p(1-p)$),
where the left hand side is evaluated using direct numerics.

\lemma{For the amplitude damping channel $\AC_g$, the reverse coherent
information $I_r(\AC_g) \geq I_c(\AC_g)$ for all $0 \leq g \leq 1$.
\label{lm:maxI_r} }
\begin{proof}
    Notice at $g = 0$, equality holds as $I_r(\AC_g) = I_c(\AC_g) = 0$.  Let
    $I_r(z)$ and $I_c(z)$ denote $I_r(\AC_g, \rho)$ and $I_r(\AC_g, \rho)$,
    respectively, where $\rho$ has Bloch vector $\rB=(0,0,z)$.
    For $1/2 \leq g < 1$ one can show, using expression
    below~\eqref{eq:blochIn}, that $I_r(0) > 0$. On the other hand when $1/2
    \leq g < 1$, $I_c(\AC_g) = 0$, from discussion above
    eq.~\eqref{eq:StrictneqA}.
    For $0 \leq g <1/2$, $I_c(z)$ is maximum for $z \geq 0$ since
    \begin{align*}
        I_c(z) &= S(\AC_g(\rho_z))-S(\AC^c_g(\rho_z))\\
        &=h\left(\frac{1+|(1-g)z+g|}{2}\right)-h\left(\frac{1+|g-gz-1|}{2}\right)\\
        & = h\left((1-g)\frac{1-z}{2}\right)-h\left(g\frac{1-z}{2}\right)
\end{align*}
    and the maximum of the difference in the last line above is obtained
    when $z \geq 0$~(see Lemma~\ref{lm:entropyineq2}).
    For $z \geq 0$,  it is easy to check, using expression
    below~\eqref{eq:blochIn}, that $I_r(z) - I_c(z) \geq 0$

\end{proof}

\begin{lemma} \label{lm:entropyineq2}
For $0\leq\alpha\leq 1/2$, let the maximum of the expression $\max_{0\leq
    p\leq1} h((1-\alpha)p)-h(\alpha p),$ be attained at $\tilde{p}$, then
    $\tilde{p}\leq \frac{1}{2}. $
\end{lemma}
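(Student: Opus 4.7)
The plan is to recast the lemma as a concavity question for the function $f(p) := h\bigl((1-\alpha)p\bigr) - h(\alpha p)$ on $[0,1]$, and to pin down its unique maximum via a sign computation at $p=1/2$.

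First I would note $f(0) = 0$ and $f(1) = h(1-\alpha) - h(\alpha) = 0$ by the symmetry $h(x) = h(1-x)$. Using $h''(x) = -1/\bigl(x(1-x)\ln 2\bigr)$, a short calculation yields
\[
f''(p) = (1-\alpha)^2 h''\bigl((1-\alpha)p\bigr) - \alpha^2 h''(\alpha p) = \frac{2\alpha - 1}{p\,(1-\alpha p)\bigl(1-(1-\alpha)p\bigr)\ln 2},
\]
which is non-positive for $\alpha \leq 1/2$. The case $\alpha = 1/2$ makes $f \equiv 0$ and the claim trivial, so assume $\alpha < 1/2$; then $f$ is strictly concave, and $f(0)=f(1)=0$ forces the maximizer $\tilde p$ to lie uniquely in $(0,1)$. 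Because $f'$ is decreasing, $\tilde p \leq 1/2$ is equivalent to $f'(1/2) \leq 0$.

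Next I would evaluate
\[
f'(1/2) = (1-\alpha)\log\frac{1+\alpha}{1-\alpha} - \alpha\log\frac{2-\alpha}{\alpha},
\]
and observe that with $T(x) := x\log\frac{2-x}{x}$ this rewrites as $T(1-\alpha) - T(\alpha)$. The goal then becomes showing $D(\alpha) := T(\alpha) - T(1-\alpha) \geq 0$ on $[0, 1/2]$. Since $T(0)=T(1)=0$, we have $D(0)=D(1/2)=0$, so by the two-point Jensen argument it suffices to prove $D$ is concave on $[0,1/2]$, equivalently (as $T''<0$) that $|T''(\alpha)| \geq |T''(1-\alpha)|$. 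A direct computation gives $T''(x) = -\frac{1}{\ln 2}\bigl[\frac{1}{x} + \frac{1}{2-x} + \frac{2}{(2-x)^2}\bigr]$, and after combining fractions,
\[
|T''(\alpha)| - |T''(1-\alpha)| = \frac{1-2\alpha}{\ln 2}\left[\frac{1}{\alpha(1-\alpha)} - \frac{1}{(1+\alpha)(2-\alpha)} - \frac{6}{(1+\alpha)^2(2-\alpha)^2}\right].
\]
For $\alpha \in [0, 1/2]$ we have $1-2\alpha \geq 0$, $\alpha(1-\alpha) \leq 1/4$ and $(1+\alpha)(2-\alpha) \geq 2$, so the bracket is at least $4 - 1/2 - 3/2 = 2 > 0$, establishing the required concavity.

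The main obstacle is this last step: the concavity of $f$ and the identification $f'(1/2) = T(1-\alpha) - T(\alpha)$ are essentially mechanical, but making the sign of $D''$ visible requires carefully factoring out $1-2\alpha$ and exploiting that $\frac{1}{\alpha(1-\alpha)}$ dominates the two remaining negative contributions by a comfortable margin. The crude bounds above just barely close the gap cleanly, and any simpler monotonicity argument on $T$ alone is not available because $T$ attains its own maximum strictly inside $(0,1)$ and is not monotone on $[0,1]$.
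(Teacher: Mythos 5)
Your proof is correct, and I verified the key computations: the closed form $f''(p) = (2\alpha-1)/\bigl(p(1-\alpha p)(1-(1-\alpha)p)\ln 2\bigr)$ is right, the reduction of $f'(1/2)\leq 0$ to $T(\alpha)\geq T(1-\alpha)$ with $T(x)=x\log\frac{2-x}{x}$ is right, and the final bracket bound $4-\tfrac12-\tfrac32=2>0$ checks out. Your route differs from the paper's in a substantive way. The paper's proof is a one-line assertion: it claims $\partial f/\partial p<0$ for all $p\geq 1/2$ ``by computing $\partial f/\partial p$ and showing that it is negative,'' without exhibiting that computation; the delicate case is precisely $p=1/2$, where the derivative is smallest in magnitude, and that is exactly the inequality $T(1-\alpha)\leq T(\alpha)$ your proof actually establishes. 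You instead prove the stronger structural fact that $f$ is concave on $[0,1]$ (which, combined with $f'(1/2)\leq 0$, recovers the paper's claimed monotonicity on $[1/2,1]$ as a corollary), and then you handle the single remaining scalar inequality by a second concavity argument on $D(\alpha)=T(\alpha)-T(1-\alpha)$ with $D(0)=D(1/2)=0$. What your approach buys is a complete, checkable argument where the paper leaves the essential step implicit; the cost is the somewhat heavier double use of second-derivative sign computations. One cosmetic remark: in the displayed difference $|T''(\alpha)|-|T''(1-\alpha)|$ the last two denominators should read $(2-\alpha)(1+\alpha)$ and $(2-\alpha)^2(1+\alpha)^2$ as in your own derivation (you wrote $(1+\alpha)(2-\alpha)$ and $(1+\alpha)^2(2-\alpha)^2$, which is the same thing, so nothing is wrong, just be consistent), and your concluding observation that no monotonicity argument on $T$ alone is available because $T$ peaks in the interior of $(0,1)$ is accurate.
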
 

\begin{proof}
    Let $f(\alpha,p) =h((1-\alpha)p)-h(\alpha p)$, we can show that for
    $0\leq\alpha\leq 1/2$, $p\geq 1/2$, $f$ is monotonically decreasing in $p$. This
    can be shown by computing $\frac{\partial f}{\partial p}$ and showing that
    it is negative for $p\geq 1/2.$ Then, the maximum of the function must be
    attained for $p\in[0,1/2].$
\end{proof}

\subsection{Improved backward only distillation protocol}
We introduce a two-stage protocol for distillation which we argue exceeds the
bound established by the reverse coherent information.  The first stage is a
modification of a standard recurrence protocol and the second stage does
hashing~\cite{BennettBrassardEA96}.
Let $\HC_{ri}$ and $\HC_{ai}$, $1 \leq i \leq 2$ be qubit Hilbert spaces, such
that $\HC_r := \HC_{r1} \ot \HC_{r2}$ is a reference space to inputs $\HC_{a}
:= \HC_{a1} \ot \HC_{a2}$ of $\FC \ot \FC$, which each map $\FC:
\HC_{ai}\rightarrow \HC_{di}$.  Let $\MC_d$ represent the operation applying
the controlled not unitary $U_{d} = \dya{0}_{d1} \ot \Ibb_{d2} + \dya{1}_{d2}
\ot X_{d2}$ followed by a standard basis $Z$~measurement on $\HC_{d2}$
resulting in outcome $(-1)^{i_{d2}}$, which for simplicity we refer to as
outcome~$i_{d2}$.

Initially, Alice prepares a state
\begin{equation}
    \ket{\psi_1}=\frac{1}{\sqrt{2}}\left(\ket{0}\ket{01}+\ket{1}\ket{10}\right)_{r1,a1a2}.
\end{equation}
This is equivalent to a single Bell pair between the $r1$ and $a1a2$ systems,
where the half of the Bell pair on the $a1a2$ system is encoded in a small
2-qubit code. Alice then sends the qubits $a1,a2$ over two copies of the
damping-dephasing channel $\mathcal{F}\otimes\mathcal{F}.$ The resulting state
is $\psi_{2}=\sum_{i,j} O_i \otimes O_j \ket{\psi_1}\bra{\psi_1}O_i^\dagger
\otimes O_j^\dagger \in \HC_{r1}\otimes \HC_{d1} \otimes \HC_{d2},$ as given in
Eq.~\eqref{eq:dampDeph}. Bob then applies a controlled-not gate from $d1$ to
$d2$ and measures $d2$ in the $Z$ basis, obtaining the outcome
$(-1)^{i_{d2}}Z$, accepting the state only if $i_{d2}=1$. Bob sends the outcome
$i_{d2}$ to Alice using backward communication, and they have agreement to only
keep pairs where Bob has measured~$i_{d2}=1$. 

We must then calculate the probability of the given outcome, as well as the
final state. To simplify the analysis of the consequence of this measurement on
the state~$\psi_2$, we note that performing a controlled-not followed by a $Z$
measurement on the second qubit, is equivalent to measuring the
observable~$Z_{d1} \otimes Z_{d2}$ prior to the action of the controlled-not
gate. This is due to the fact that under the action of the controlled-not gate
$U_{d} = \dya{0}_{d1} \ot \Ibb_{d2} + \dya{1}_{d2} \ot X_{d2}$, the following
holds: $U_d (Z_{d1} \otimes Z_{d2})U_d^{\dagger} = I_{d1} \otimes Z_{d2}$ and
therefore any measurement of $Z_{d_2}$ after the action of $U_d$ is equivalent
to the measurement of $Z_{d1} \otimes Z_{d2}$ prior to $U_d$.  The
post-measurement state given that $i_{d2}=1,$ is
$$\psi_3=\frac{\Pi_{d1,d2}{\psi_2}\Pi_{d1,d2}}{\Tr{[\Pi_{d1,d2}{\psi_2}]}},\textnormal{
    where }\Pi_{d1,d2} = \frac{\mathbbm{1}- Z_{d1}Z_{d2}}{2}.$$ Bob obtains the
outcome $i_{d2}=1$ with probability $\Tr{[\Pi_{d1,d2}{\psi_2}]}$, for which we
give a closed form below.

Note that $\Pi_{r1,r2} O_i \otimes O_j \ket{\psi_1} = c_{ij} O_i \otimes O_j
\ket{\psi_1}$, where $c_{ij} = 1$ if $O_i \otimes O_j$ commutes with
$Z_{d1}Z_{d2}$ and $0$ otherwise. Thus, the only terms that contribute to
$\psi_3$ are those resulting from applying $O_0 \otimes O_0, O_1\otimes
O_1,O_2\otimes  O_2,O_0\otimes  O_2,O_2\otimes  O_0$ to $\ket{\psi_1}.$ Then,
we find that the state is:
\begin{align*}
    \left((1-p)^2+p^2\right) \ket{\psi_1}\bra{\psi_1}+2p(1-p)Z_{d1}\ket{\psi_1}\bra{\psi_1}Z_{d1},
\end{align*} 
where we have replaced the $a1,a2$ labels with $d1,d2$ given the labelling of
the output space. Moreover, we find the probability of obtaining the above
state to be: $\Tr{[\Pi_{d1,d2}{\psi_2}]} = (1-g)$, again by keeping the $O_i
\otimes O_j$ terms that commute with $Z_{d1}Z_{d2}$.

To return back to the correct basis, we apply a controlled-not $d1$ to $d2$,
and get the state $\rho_{r1d1}\otimes \ket{1}\bra{1}_{d2}$, with
\begin{align}
    \rho_{r1d1}= (1-q/2) \phi_{r1 d1}+\frac{1}{2}q Z_{d1}\phi_{r1 d1}Z_{d1},
    \label{eq:psState}
\end{align}
where $q = 4p(1-p)$ and $\phi = \ketbra{\phi}$ is the maximally entangled
state. The overall success probability of the modified recurrence step of the
protocol is $p_s = (1-g)$. This step has practical relevance, the resulting
state~\eqref{eq:psState} is just a dephased version of the input without any
amplitude damping even though the input was sent via $\FC$ that applies both
dephasing and damping. 
It should be noted that in this stage, replacing the maximally entangled state
$\ket{\psi_1}$ with some $\ket{\psi_s} = \sqrt{s}\ket{0}\ket{01} + \sqrt{1-s}
\ket{1}\ket{10}$, $0 \leq s \leq 1$ leaves the output state~\eqref{eq:psState}
unchanged while modifying $p_s$ to a possibly lower value of $4s(1-s)(1-g)$.

We also note that this state uses two channel uses in order to produce a single
Bell pair (upon successful measurement), thus it has an ideal rate of $R =
1/2.$ Following this step, the hashing protocol is carried out on several
copies of the accepted state $\rho_{r1d1}$, resulting in a protocol with an
overall yield of: 
\begin{equation}
    Y = \frac{1}{2}p_s(1-h(q/2)).
    \label{eq:yldNow}
\end{equation}
Hashing requires one-way communication which can always be done from receiver
to sender. Thus, all classical communication in the protocol occurs from
receiver to sender.
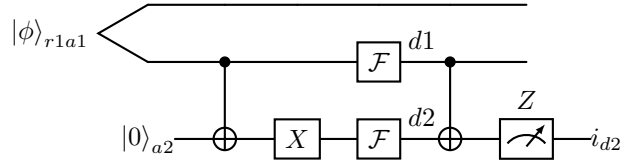
\begin{figure}[h]
\centering
\begin{quantikz}
\makeebit[angle=-60]{$\ket{\phi}_{r1a1}$}& & & & &
\\
&\ctrl{1} & & \gate[label style={label={10:$d1$}}]{\mathcal{F}}& \ctrl{1}& \\ 
\ket{0}_{a2} &  \targ{} & \gate{X} & \gate[label style={label={10:$d2$}}]{\mathcal{F}} & \targ{} & \meter{Z} &{i_{d2}}
\end{quantikz}
\caption{The modified recurrence stage of the proposed distillation protocol.}
\end{figure}

The distillation protocol can also be viewed as preparing a GHZ state, 
$(\ket{000} + \ket{111})/\sqrt{2}$, across the $r \ot
a1 \ot a2$ system, flipping one of the qubits on the $a1 \ot a2 $ system, passing the
$a1 \ot a2$ system across the damping-dephasing channel and checking the parity between
the two systems at the channel output. This parity is only flipped if one or 
both the noisy qubits experience damping and thus damping can be detected exactly.
This parity check is done by measuring one of the output qubits, leaving a Bell
state between the reference $r$ and the channel output. In this way, the protocol
uses a GHZ state to protect one Bell pair worth of entanglement from damping error.

We find~(see Fig.~\ref{fig:ciChan}) for both modest and specially high noise
regimes, this yield is higher than the channel coherent information and the
channel reverse coherent information. 
\begin{figure}[htbp]
	\centering
    \includegraphics{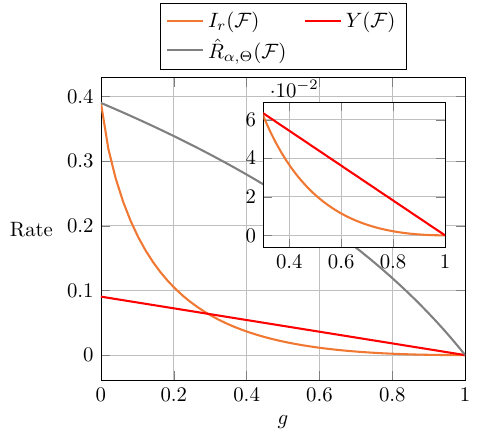}
    \caption{A plot of the reverse coherent information $I_r$ and yield
    $Y$~\eqref{eq:yldNow} of the damping-dephasing channel
    $\FC$~\eqref{eq:dampDeph} as a function of the amplitude damping
    probability $g$ for fixed dephasing probability $p=0.15$. The plot also
    shows the 
    %
    %
    geometric R\'{e}nyi Theta-information of the channel
    $\hat{R}_{\alpha,\Theta}(\FC)$~\cite{FangFawzi21}~$(\al = 1 +
    2^{-20})$, which upper bounds the two-way assisted entanglement
    distillation rate. The inset plot zooms into the $g \in [0.3,1]$ interval
    where $Y$ exceeds $I_r$.}
	\label{fig:ciChan}
\end{figure}
For any $p > 0$, the yield $Y$ exceeds the asymptotic
estimate~\eqref{eq:asyRev} of $I_r(\FC)$ for small enough $\dl g$, i.e.,
the ratio
\begin{align}
    \begin{aligned}
        \frac{Y}{I_r(\FC)} &\simeq \frac{1}{(\dl g)^{q/(1-q)}}
    \left( \frac{1 - h(q/2)}{2(1-q)q^{q/(1-q)}} \right),
    \label{eq:ratio}
    \end{aligned}
\end{align}
can always be made larger than one for small enough $\dl g$.  As a result, $Y$
can provide a strictly tighter lower bound on the backward capacity of the
damping-dephasing channel beyond the state-of-the art rates given by the
reverse coherent information.
\section{Evaluation of the Results}
To benchmark our proposed scheme's yield, we would like to find the tightest
possible upper bound on the two-way assisted capacity of the dephasing-damping
channel. 
Figure \ref{fig:ub_comparison} shows a comparison of various upper bounds used
in the literature.  One upper bound shown in the plots is one half of the
quantum mutual information of the channel which is an upper bound on the
two-way assisted quantum capacity~\cite{KhatriWilde20}.  The quantum mutual
information of the channel is given by the following optimization 
$$\max_{\phi_{AA^\prime}}I(A:B)_{\sigma}, \quad \text{where} \quad \sigma =
\mathcal{N}_{A^\prime\rightarrow B}(\phi_{AA^\prime}).$$
The quantum mutual information of the channel at $\rho_A$~\cite{BennettShorEA02}: 
$$I(\mathcal{N},\rho_A) = S(\rho_A)+S(\mathcal{N}(\rho_A))+\text{Tr}[W \log_2 W],$$
where $W_{ij} = \text{Tr}[E_i \rho_A E_j^\dagger]$, and $E_i,E_j$ are Kraus
operators of the channel. The term $-\text{Tr}[W \log_2 W]$ is the entropy
exchange. The quantum mutual information of the channel is concave in the
input density matrix $\rho_A$~\cite{BennettShorEA02}, and when the channel is
$Z$-covariant, we can take $\rho_A$ to be diagonal. Since the dephasing-damping
channel is $Z$-covariant, as shown in equation (\ref{eq:symm}), the evaluation
of the upper bound of one-half of the quantum mutual information of the channel
reduces to a simple optimization problem easily carried out using
\texttt{scipy.optimize.minimize}~\cite{2020SciPy-NMeth}.  The plot in Figure
\ref{fig:ub_comparison} also shows the \emph{max-Rains} information and the
\emph{geometric R\'{e}nyi Theta-information}, $\hat{R}_{\alpha,\Theta}$, upper
bounds on the two-way assisted capacity, each having a semidefinite program
formulation along with accompanying code provided by~\cite{FangFawzi21}.

From the plots in Fig.~\ref{fig:ub_comparison}, it is evident that the 
geometric R\'{e}nyi Theta-information, $\hat{R}_{\alpha,\Theta}$, 
provides the tightest upper bound on the two-way
assisted quantum capacity of the dephasing-damping channel. 

\begin{figure}[htbp]
	\centering
    \includegraphics{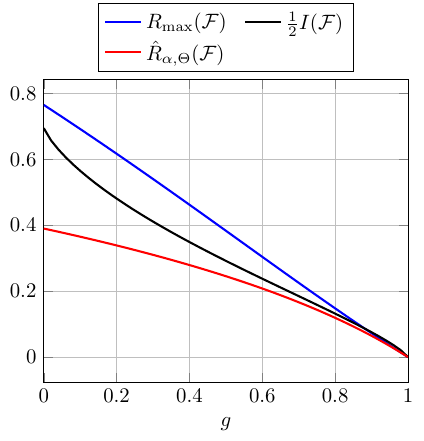}
    \caption{A comparison of various upper bounds from the literature for the
    two-way classical communication assisted capacity of the damping-dephasing
    channel at dephasing probability $p=0.15$.}
	\label{fig:ub_comparison}
\end{figure}

A combined lower bound on the achievable backward-communication assisted
entanglement distillation rate for different noise regimes is given by $L(\FC)
= \max\{I_r(\FC),Y\}$. At $g=0$, the channel is a pure dephasing channel, where
the lower bound $I_r(\FC)=I_c(\FC)=1-h(p)$ matches the upper bound given by the
geometric R\'{e}nyi Theta-information,
$\hat{R}_{\alpha,\Theta}(\FC)$~\cite{FangFawzi21}. We plot the lower bound
$L(\FC)$ and the upper bound $\hat{R}_{\alpha,\Theta}(\FC)$ against the
dephasing probability $p$, for different values of the amplitude damping
probability $g$, observing how the gap between the lower and upper bound
changes in Figure \ref{fig:gap_evol}, as well as the difference between the
upper and lower bounds in Figure \ref{fig:gap_diff_evol}.

 \begin{figure}[htbp]
           \centering
     \begin{subfigure}[b]{0.32\textwidth}
          \centering
          \resizebox{\linewidth}{!}{\includegraphics{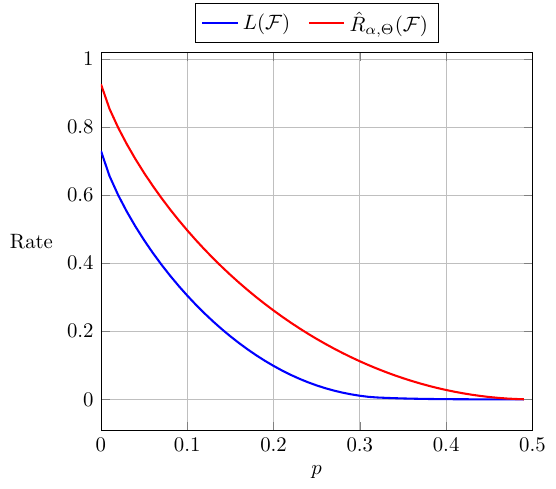}}  
          \caption{$g=0.1$}
          \label{fig:A}
     \end{subfigure}
     \begin{subfigure}[b]{0.32\textwidth}
          \centering
          \resizebox{\linewidth}{!}{\includegraphics{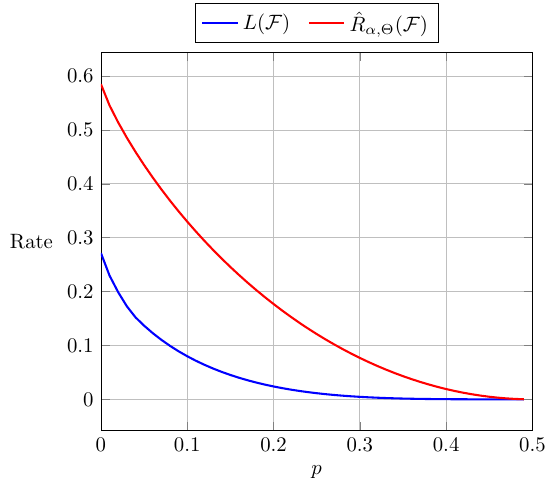}}  
          \caption{$g=0.5$}
          \label{fig:B}
     \end{subfigure}
     \begin{subfigure}[b]{0.32\textwidth}
          \centering
          \resizebox{\linewidth}{!}{\includegraphics{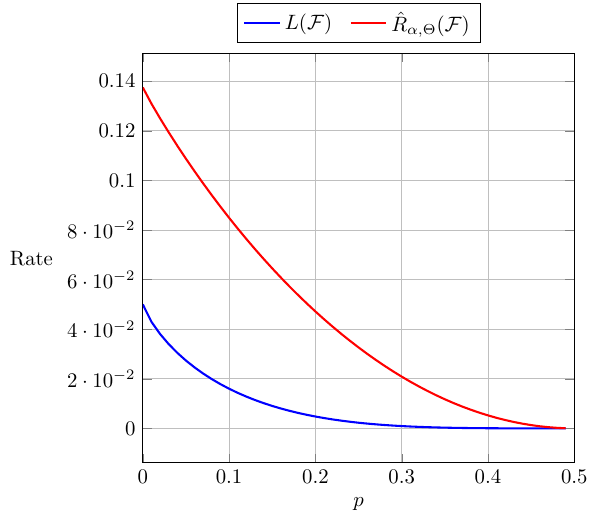}}  
          \caption{$g=0.9$}
          \label{fig:C}
     \end{subfigure}
     \caption{For different $g$, the plots show the 
     geometric R\'{e}nyi Theta-information upper bound
     $\hat{R}_{\alpha,\Theta}(\FC)$~\cite{FangFawzi21} with $\al = 1 + 2^{-20}$
     and the lower bound $L(\FC)$ for the joint damping-dephasing channel.}
     \label{fig:gap_evol}
 \end{figure}

  \begin{figure}[htbp]
           \centering

     \begin{subfigure}[b]{0.32\textwidth}
          \centering
          \resizebox{\linewidth}{!}{\includegraphics{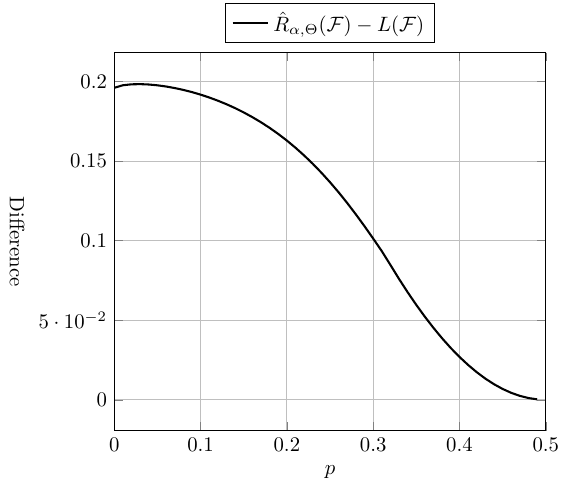}}  
          \caption{ $g=0.1$}
          \label{fig:8A}
     \end{subfigure}
     \begin{subfigure}[b]{0.32\textwidth}
          \centering
          \resizebox{\linewidth}{!}{\includegraphics{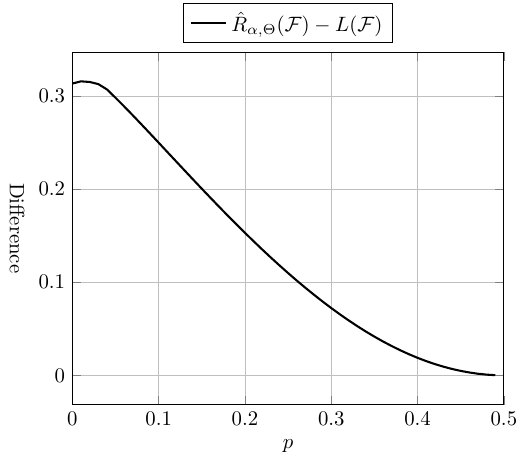}}  
          \caption{$g=0.5$}
          \label{fig:8B}
     \end{subfigure}
     \begin{subfigure}[b]{0.32\textwidth}
          \centering
          \resizebox{\linewidth}{!}{\includegraphics{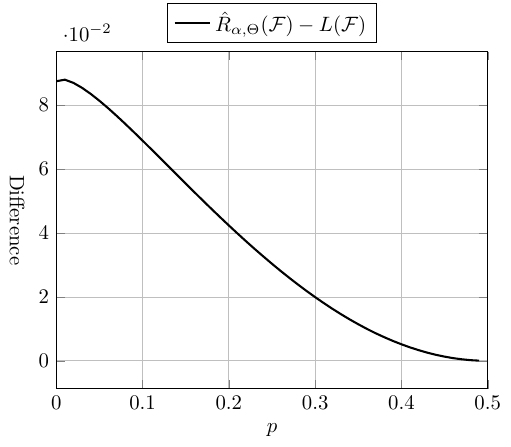}}  
          \caption{$g=0.9$}
          \label{fig:8C}
     \end{subfigure}
     \caption{For different $g$, the plots show the difference between the
     geometric R\'{e}nyi Theta-information upper bound
      $\hat{R}_{\alpha,\Theta}(\FC)$~\cite{FangFawzi21} with $\al = 1 +
      2^{-20}$} and the lower bound $L(\FC)$ for the joint damping-dephasing
      channel.
      \label{fig:gap_diff_evol}
 \end{figure}

\section{Discussion}
The feature of the proposed distillation protocol is to isolate the sources of
noise in order to address each individually. In the first stage, the modified
recurrence step identifies amplitude damping noise. In fact, by post-selecting
out the measurement of $i_{d2} = 1$, the scheme guarantees that any noise that
is a result of amplitude damping noise is caught. This exploits the asymmetric
source of the noise, that $\ket{1}$ can be mapped to~$\ket{0}$ but the inverse
is not true under this source of noise. Moreover, the dephasing noise commutes
with this first step, and can be mapped into a source of dephasing noise on the
post-selected state. The rate of the dephasing noise on this post-selected
qubit is: $2p(1-p)$, which is just the result of one of the receiver's qubits
having undergone a~dephasing $Z$ error. It is worth remarking that if both of
the receiver's qubits were dephased then those errors would cancel out. 

Our backward-only protocol has rates higher than those given by the
damping-dephasing channel $\FC$'s (reverse $I_r$)~coherent information~($I_c$)
when noise in this channel is modest. It would be interesting explore other
protocols, perhaps forward only, and appropriately compare them with $I_c$ and
$I_r$ in the low-noise regime of $\FC$.

In addition to these rates for entanglement sharing, it would be valuable to
study other quantum capacities of the physically well motivated
channel~$\FC$~( for instance see~\cite{LeditzkyKaurEA18}).  Our study of the
channel's quantum capacity reveals non-additivity in the channel's coherent
information.  This non-additivity can be found at the simplest two-letter level
using a neat and explicit ansatz~\eqref{eq:ansatz}.  The magnitude of the
non-additivity observed, $O(10^{-3})$, is comparable to those found for
dephrasure and generalized erasure channels~\cite{LeditzkyLeungEA18,
SiddhuGriffiths21, Filippov21} however the noise channel in our case is
qualitatively different. Our channel is strongly motivated to capture $T_1$ and
$T_2$ noise, both observed together in practice.  Occurrence of two-letter
level non-additivity in such a practically relevant setting paves the way for
its experimental study.

\section*{Acknowledgement}
T.J, V.S., and J.S, are supported by the U.S. Department of Energy, Office of
Science, National Quantum Information Science Research Centers, Co-design
Center for Quantum Advantage (C2QA) contract (DE-SC0012704). 
This work was done in part while the author was visiting the Simons Institute
for the Theory of Computing, supported by NSF QLCI Grant No.~2016245.
We also thank Francesco Anna Mele for bringing their recent work
~\cite{MeleSalekEA24} to our attention. 

\printbibliography[heading=bibintoc]

\appendix

\section{Channel Coherent information}
\label{App.ChnCoh}

We claimed in Lemma~\ref{lm:xzOpt} that the coherent information of $\FC$ at an input
density operator $\rho$ with Bloch coordinates $\rB = (x,y,z)$, $I_c(\FC,
\rho)$, only depends on $z$ and $x^2 + y^2$. Proof for this claim is as follows:

\begin{proof}
    The coherent information, $I_c(\FC, \rho) = S\big( \FC(\rho) \big) - 
    S\big( \GC(\rho) \big)$, where we use $\GC$ in~\eqref{eq:chanComp2} to
    be the complement of $\FC$.
    We show each term in this difference depends on $z$ and $x^2 + y^2$.
    The first term, $S \big( \FC(\rho)
    \big)$ is the entropy of a qubit density operator and this entropy,
    see comment below~\eqref{eq:entroVon}, 
    only depends on $|\rB_d|$ defined in~\eqref{eq:outBloch},
    Notice $|\rB_d|$ depends on $z$ and $x^2 + y^2$. We now analyze the second
    term, $S\big(\GC(\rho)\big)$. This term is the entropy of the Block matrix
    $\rho_{c1c2}$ in~\eqref{eq:blochMatC2}.  This entropy depends on
    eigenvalues of $\rho_{c1c2}$. These eigenvalues $\{\lm_i\}$ are solutions
    to the polynomial $g(\lm) = 0$ where
    \begin{equation}
        g(\lm) = \det(R), \;
        R = \rho_{c1c2} - \lm \Ibb_4 
        = \begin{pmatrix} 
            R_{11} & R_{12} \\
            R_{21} & R_{22}
        \end{pmatrix}.
    \end{equation}
    and the $2 \times 2$ blocks
    \begin{align}
        \begin{aligned}
            R_{11} &= \frac{1}{2} \big( (1-g) \dya{\phi_1} (1+z) + \dya{\phi_0} (1-
            z)\big) - \lm \Ibb_2, \\
            R_{12} &= \sqrt{g} \dyad{\phi_0}{\phi_1} (x + i y)/2 =
            R_{21}^{\dag}, \\
            R_{22} &= g \dya{\phi_{11}} (1-z)/2 - \lm \Ibb_2 
        \end{aligned}
    \end{align}
    Using Schur's formula for determinant for block matrices,
    \begin{equation}
        \det(R) = \det(R_{11}) \det(R_{22} - R_{12}^{\dag} R_{11}^{-1} R_{12} )
        \label{eq:detx}
    \end{equation}
    Notice $R_{ij}$ are all $2 \times 2$ matrices and for two such matrices $A$
    and $B$, the determinant $\det(A+B) = \det(A) + \det(B) + \Tr(A)\Tr(B) -
    \Tr(AB)$. Using this equality in the previous equation~\eqref{eq:detx},
    we get $ \det(R_{22} - R_{12}^{\dag} R_{11}^{-1} R_{12} )$ equals
    \begin{align}
        \begin{aligned}
            & \det(R_{22}) + \det(R_{12}^{\dag} R_{11}^{-1} R_{12}) \\
        &-\Tr(R_{22})\Tr(R_{12}^{\dag} R_{11}^{-1} R_{12}) +
        \Tr(R_{22} R_{12}^{\dag} R_{11}^{-1} R_{12})
        \end{aligned}
    \end{align}
    Notice each term on the right side of the equality is either independent of
    $x,y$ or depends on $x^2 + y^2$. Using this and the fact that $R_{11}$ only
    depends on $z$, we find $\det(R)$ in eq.~\eqref{eq:detx} depends on $x^2 +
    y^2$ and $z$.  This dependence implies $g(\lm)$ and thus its roots
    $\{\lm_i\}$~(the eigenvalues of $\rho_{c1c2}$) depend on $x^2 + y^2$ and
    $z$. 
\end{proof}

\end{document}